\newtheorem{theorem}{Theorem}
\newtheorem{lemma}[theorem]{Lemma}
\newtheorem{observation}[theorem]{Observation}
\def\@endtheorem{\endtrivlist}
\newcounter{rrule}
\newenvironment{rrule}{\refstepcounter{rrule}\par\smallskip\noindent
\textbf{(R\arabic{rrule})}\quad}{}
\newcommand{\currentrule}{R\arabic{rrule}}
\begin{document}

\title{Kernel for $K_t$-free edge deletion}
\author{Dekel Tsur%
\thanks{Ben-Gurion University of the Negev.
Email: \texttt{dekelts@cs.bgu.ac.il}}}
\date{}
\maketitle

\begin{abstract}
In the \emph{$K_t$-free edge deletion} problem, the input is a graph $G$ and an
integer $k$, and the goal is to decide whether there is a set of at most $k$
edges of $G$ whose removal results a graph with no clique of size $t$.
In this paper we give a kernel to this problem with $O(k^{t-1})$ vertices and
edges.
\end{abstract}

\paragraph{Keywords} graph algorithms, parameterized complexity,
kernelization.

\section{Introduction}

In the \emph{$H$-free edge deletion} problem, the input is a graph $G$ and an
integer $k$, and the goal is to decide whether there is a set of at most $k$
edges of $G$ whose removal results a graph that does not contain $H$ as an
induced graph.
The kernelization of $H$-free edge deletion problems have been studied in
several papers.
Cai and Cai~\cite{cai2015incompressibility} showed that unless
$\mathrm{NP} \subseteq \mathrm{coNP/poly}$, $H$-free edge deletion does not
have a polynomial kernel in the following cases:
(1) $H$ is 3-connected and is not a complete graph
(2) $H$ is a path or a cycle with at least 4 edges.
On the positive side, polynomial kernels were given for the cases when $H$
is a $P_3$~\cite{gramm2005graph},
a $P_4$~\cite{guillemot2013non},
a diamond~\cite{fellows2011graph,sandeep2015parameterized,cao2018polynomial},
and a $K_t$ (a clique with $t$ vertices)~\cite{cai2012polynomial}.

For $K_t$-free edge deletion, the kernel of Cai~\cite{cai2012polynomial} has
$O(k^{t(t-1)/2})$ vertices and edges.
In this paper we give a kernel for $K_t$-free edge deletion with
$O(k^{t-1})$ vertices and edges.

\paragraph{Preliminaries}
For a graph $G$, a set of edges whose removal results a graph with no
clique of size $t$ is called a \emph{deletion set} of $G$.

In the \emph{hitting set} problem the input is a set
$\mathcal{F}$ of subsets of a set $U$ and an integer $k$,
and the goal is to decide whether there is a set $S \subseteq U$ of size at
most $k$ such that $S \cap F \neq \emptyset$ for every $F \in \mathcal{F}$.
Such a set $S$ is called a \emph{hitting set} of $\mathcal{F}$.

Let $\mathcal{F}$ be a set of subsets of a set $U$, and let
$S_1,\ldots,S_p \in \mathcal{F}$ be distinct sets.
We say that $S_1,\ldots,S_k$ is a \emph{sunflower} if there is a set $Y$
such that $S_i \cap S_j = Y$ for every $i\neq j$,
and $S_i \setminus Y \neq \emptyset$ for every $i$.
The set $Y$ is called the \emph{core} of the sunflower.

\begin{lemma}[Sunflower lemma~\cite{erdos1960intersection}]\label{lem:sunflower}
Let $\mathcal{F}$ be a set of subsets of a set $U$, and every set in
$\mathcal{F}$ has size at most $d$.
If $|\mathcal{F}| > 2\cdot d!(k-1)^d$ then $\mathcal{F}$ contains a sunflower
$S_1,\ldots,S_k$.
Moreover, such a sunflower can be found in polynomial time.
\end{lemma}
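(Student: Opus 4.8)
The plan is to reduce the stated bound for families of sets of size at most $d$ to the classical Erd\H{o}s--Rado argument for uniform families, and then to prove the uniform case by induction on the set size. First I would partition $\mathcal{F}$ according to the sizes of its members, writing $\mathcal{F}_i = \{F \in \mathcal{F} : |F| = i\}$ for $0 \le i \le d$. A set of size $0$ is the empty set, of which $\mathcal{F}$ contains at most one, so $|\mathcal{F}_0| \le 1 = 0!(k-1)^0$. My goal is to show that some $\mathcal{F}_i$ with $i \ge 1$ has more than $i!(k-1)^i$ members; the factor $2$ in the hypothesis is exactly what makes this work. Indeed, the thresholds $a_i := i!(k-1)^i$ satisfy $a_i/a_{i-1} = i(k-1) \ge 2$ for all but the smallest indices, so they grow at least geometrically and $\sum_{i=0}^{d} a_i \le 2\,d!(k-1)^d$ (a routine telescoping estimate, assuming $k \ge 2$; the case $k = 1$ is degenerate and handled directly). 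Hence if $|\mathcal{F}| > 2\,d!(k-1)^d \ge \sum_{i=0}^d a_i$, then by pigeonhole some class exceeds its threshold, and since $\mathcal{F}_0$ cannot, this class is some $\mathcal{F}_i$ with $i \ge 1$.

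It then remains to prove the uniform statement: a family of sets each of size exactly $i$ with more than $i!(k-1)^i$ members contains a sunflower with $k$ petals. I would do this by induction on $i$. Greedily extract a maximal subfamily of pairwise disjoint sets $S_1, \dots, S_m$. If $m \ge k$ we are done, since pairwise disjoint sets form a sunflower with empty core. Otherwise $m \le k-1$, and by maximality every member of the family meets $Y := S_1 \cup \cdots \cup S_m$, a set of at most $i(k-1)$ elements. Summing incidences, some element $x \in Y$ lies in more than $i!(k-1)^i / (i(k-1)) = (i-1)!(k-1)^{i-1}$ of the sets; deleting $x$ from each of these yields a family of distinct $(i-1)$-element sets above the inductive threshold. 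The induction hypothesis produces a sunflower of $k$ petals there, and re-inserting $x$ into every petal produces a sunflower of $k$ petals in the original family. The base case $i = 1$ is immediate, since more than $k-1$ distinct singletons are $k$ pairwise disjoint sets.

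For the ``polynomial time'' claim I would observe that every step above is constructive and efficient: partitioning by size and the pigeonhole selection are linear in the input; the greedy disjoint family and the most-frequent-element computation are polynomial in $|\mathcal{F}|$ and $|U|$; and the recursion has depth at most $d$, so the whole procedure runs in time polynomial in the input. The main obstacle I anticipate is not any single step but getting the constants to line up---specifically, verifying that the geometric decay of the thresholds $a_i$ really yields the clean factor $2$ across all admissible $d$ and all $k \ge 2$, and checking the boundary indices $i \in \{0,1\}$ where the ratio $i(k-1)$ can dip below $2$. This is the place where a careless estimate would break the claimed bound.
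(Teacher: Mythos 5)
The paper never proves this lemma: it is stated with a citation to Erd\H{o}s and Rado and used as a black box, so there is no in-paper proof to compare yours against. Judged on its own, your proof is correct, and it is the classical argument: the uniform case via the Erd\H{o}s--Rado induction (extract a maximal pairwise disjoint subfamily; if it has fewer than $k$ members, pigeonhole an element $x$ of its union into more than $(i-1)!(k-1)^{i-1}$ sets, recurse on the sets with $x$ deleted, and re-insert $x$), and the paper's ``size at most $d$'' formulation via partitioning into uniform layers $\mathcal{F}_i$ and pigeonholing against the thresholds $a_i = i!(k-1)^i$ --- which is exactly what the factor $2$ in the statement is for. The one step you flag as delicate, namely that the ratios $a_i/a_{i-1} = i(k-1)$ can dip below $2$ at the boundary indices, is best bypassed rather than patched: for $k \ge 2$, each of the $d$ terms with $i \le d-1$ is at most $(d-1)!(k-1)^{d-1}$, so
\[
\sum_{i=0}^{d} i!(k-1)^i \;\le\; d\cdot(d-1)!(k-1)^{d-1} + d!(k-1)^d \;=\; d!(k-1)^{d-1} + d!(k-1)^d \;\le\; 2\,d!(k-1)^d,
\]
with no case analysis at all; note the bound is tight (e.g.\ $d=2$, $k=2$), so it is the strict inequality in the hypothesis that makes the pigeonhole go through. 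Your treatment of petal-nonemptiness (required by this paper's definition of sunflower) is also sound, since the layer $\mathcal{F}_0$ can never exceed its threshold $a_0=1$ and the inductive step preserves petals. One genuine corner case to record: for $k=1$ the statement as written is actually false when $\mathcal{F}=\{\emptyset\}$ and $d\ge 1$, because a one-petal sunflower needs a nonempty petal; so ``handled directly'' should mean ``any nonempty member is a one-petal sunflower,'' and the pathological family $\{\emptyset\}$ must be excluded. This is a defect of the statement rather than of your proof, and it is harmless for the paper, which only invokes the lemma on families of nonempty $(t-2)$-element sets.
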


\section{The kernel}
Given an instance $(G,k)$ of $K_t$-free edge deletion,
the kernelization algorithm consists of three stages.
In the first stage, the algorithm creates an instance $(\mathcal{F},k)$ of
hitting set as follows.
For a set of vertices $X$, let $E_X = \{(x,y)\colon x,y\in X, x\neq y\}$.
The set $\mathcal{F}$ contains a set $E_X$ for every clique $X$ of size $t$
in $G$.
\begin{observation}\label{obs:S}
A set of edges $S$ is a deletion set of $G$ if and only if
$S$ is a hitting set of $\mathcal{F}$.
\end{observation}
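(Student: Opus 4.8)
The plan is to prove both directions of the equivalence directly from the definitions, using the single structural fact that deleting edges from $G$ cannot create a clique that was not already present in $G$. Formally, if $X$ is a clique of $G - S$, then every pair of distinct vertices of $X$ is joined by an edge of $G - S$, and hence by an edge of $G$, so $X$ is also a clique of $G$; equivalently, the $t$-cliques of $G - S$ form a subset of the $t$-cliques of $G$. This observation is what links the family $\mathcal{F}$, which is computed from $G$, to the cliques that survive in $G - S$.

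For the forward direction, suppose $S$ is a deletion set, so that $G - S$ has no clique of size $t$. Let $E_X \in \mathcal{F}$ be arbitrary; by construction $X$ is a $t$-clique of $G$, so $E_X$ is exactly the set of $\binom{t}{2}$ edges joining the vertices of $X$. If $S \cap E_X = \emptyset$, then none of these edges is deleted, so all of them remain present in $G - S$, making $X$ a clique of size $t$ in $G - S$ --- a contradiction. Hence $S \cap E_X \neq \emptyset$ for every $E_X \in \mathcal{F}$, i.e. $S$ is a hitting set of $\mathcal{F}$.

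For the converse, suppose $S$ is a hitting set of $\mathcal{F}$ but, for contradiction, that $G - S$ still contains a clique $X$ of size $t$. By the structural fact above, $X$ is then also a $t$-clique of $G$, so $E_X \in \mathcal{F}$. Since $X$ is a clique in $G - S$, every edge of $E_X$ survives the deletion, so $S \cap E_X = \emptyset$; this contradicts the assumption that $S$ hits every member of $\mathcal{F}$. Therefore $G - S$ has no clique of size $t$, i.e. $S$ is a deletion set.

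There is essentially no hard step here: the content is a routine unwinding of the definitions once one records that cliques are never created by edge deletions. The only point deserving care is to use the correspondence $X \mapsto E_X$ between $t$-cliques of $G$ and members of $\mathcal{F}$ in the correct direction in each half of the argument, so that the fact ``$X$ is a $t$-clique of $G$'' is available precisely when it is needed to conclude $E_X \in \mathcal{F}$.
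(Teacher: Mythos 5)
Your proof is correct: both directions follow exactly as you argue, with the key (and easily overlooked) point being that edge deletion never creates new cliques, so the $t$-cliques of $G-S$ are among the $t$-cliques of $G$. The paper states this as an observation with no proof at all, evidently regarding it as immediate, and your argument is precisely the routine unwinding of definitions the author had in mind.
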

In the second stage that kernelization algorithm creates an instance
$(\mathcal{F}',k')$ of hitting set
that is equivalent to $(\mathcal{F},k)$ and $|\mathcal{F}'| = O(k^{t-1})$.
In the third stage, the algorithm creates an instance $(G',k')$ of
$K_t$-free edge deletion that is equivalent to $(G,k)$.
The number of vertices in $G'$ is at most
$t \cdot |\mathcal{F}'| = O(k^{t-1})$
and the number of edges is at most
$\binom{t}{2} \cdot  |\mathcal{F}'| = O(k^{t-1})$.

We now describe the second stage of the kernelization algorithm.
For every edge $(x,y) \in E(G)$,
let $\mathcal{F}_{xy} = \{X \subseteq V(G) \setminus\{x,y\}\colon
E_{X \cup \{x,y\}} \in \mathcal{F}\}$.
The kernelization algorithm uses the following reduction rule.
\begin{rrule}
If there is an edge $(x,y) \in E(G)$ such that
$|\mathcal{F}_{xy}| > 2\cdot(t-2)!\cdot k^{t-2}$,
find a sunflower $X_1,\ldots,X_{k+1}$ in $\mathcal{F}_{xy}$
and let $Y$ be the core of the sunflower.
For every $E_Z \in \mathcal{F}$ such that $Y \cup \{x,y\} \subseteq Z$,
remove $E_Z$ from $\mathcal{F}$.
Additionally, add the set $E_{Y \cup \{x,y\}}$ to
$\mathcal{F}$.\label{rule:sunflower}
\end{rrule}

Denote by $\mathcal{F}'$ the set $\mathcal{F}$ after the application of
Rule~(\currentrule).
\begin{lemma}\label{lem:S}
Let $S$ be a set of edges of $G$ of size at most $k$.
$S$ is a hitting set of $\mathcal{F}$ if and only if $S$ is a hitting set of
$\mathcal{F}'$.
\end{lemma}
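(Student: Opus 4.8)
The plan is to prove the two directions of the "if and only if" separately, analyzing exactly what Rule~(\currentrule) changes. The rule takes a sunflower $X_1,\ldots,X_{k+1}$ in $\mathcal{F}_{xy}$ with core $Y$, removes from $\mathcal{F}$ every set $E_Z$ with $Y\cup\{x,y\}\subseteq Z$, and adds the single set $E_{Y\cup\{x,y\}}$. Note that each removed set $E_Z$ satisfies $E_{Y\cup\{x,y\}}\subseteq E_Z$, so the added set is a subset of every set it replaces. I would first record this containment as the structural fact driving both directions.

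For the easier direction, suppose $S$ is a hitting set of $\mathcal{F}'$. Each set $E_W\in\mathcal{F}\cap\mathcal{F}'$ is hit by $S$ directly. For a removed set $E_Z$ (with $Y\cup\{x,y\}\subseteq Z$), since $E_{Y\cup\{x,y\}}\in\mathcal{F}'$ is hit by $S$ and $E_{Y\cup\{x,y\}}\subseteq E_Z$, the set $S$ also hits $E_Z$. Hence $S$ hits every set of $\mathcal{F}$, so $S$ is a hitting set of $\mathcal{F}$.

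The harder direction is to show that a hitting set $S$ of $\mathcal{F}$ with $|S|\le k$ also hits $\mathcal{F}'$. The only set in $\mathcal{F}'\setminus\mathcal{F}$ is $E_{Y\cup\{x,y\}}$, so it suffices to prove $S\cap E_{Y\cup\{x,y\}}\neq\emptyset$; every other set of $\mathcal{F}'$ already lies in $\mathcal{F}$ and is hit. This is where the sunflower structure and the size bound $|S|\le k$ do the work. For each petal $X_i$, the set $E_{X_i\cup\{x,y\}}$ belongs to $\mathcal{F}$ (by definition of $\mathcal{F}_{xy}$), so $S$ hits it: there is an edge $e_i\in S\cap E_{X_i\cup\{x,y\}}$. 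The key claim is that if $S$ avoids $E_{Y\cup\{x,y\}}$ entirely, then the edges $e_1,\ldots,e_{k+1}$ are all distinct, forcing $|S|\ge k+1$, a contradiction. To establish distinctness I would argue that any edge lying in both $E_{X_i\cup\{x,y\}}$ and $E_{X_j\cup\{x,y\}}$ for $i\neq j$ must lie in $E_{Y\cup\{x,y\}}$; this uses that the petals intersect only in the core $Y$, so a common vertex of $X_i\cup\{x,y\}$ and $X_j\cup\{x,y\}$ is either $x$, $y$, or an element of $Y$, placing the shared edge inside $E_{Y\cup\{x,y\}}$.

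The main obstacle is precisely this disjointness-up-to-core bookkeeping: I must verify carefully that $E_{X_i\cup\{x,y\}}\cap E_{X_j\cup\{x,y\}}\subseteq E_{Y\cup\{x,y\}}$, handling the endpoints $x,y$ correctly since they are common to all petals. Granting that, an edge $e_i\in S\cap E_{X_i\cup\{x,y\}}$ lying outside $E_{Y\cup\{x,y\}}$ cannot belong to any $E_{X_j\cup\{x,y\}}$ with $j\neq i$, so the $e_i$ are pairwise distinct and $S$ contains at least $k+1$ edges, contradicting $|S|\le k$. Therefore $S$ must intersect $E_{Y\cup\{x,y\}}$, completing the proof.
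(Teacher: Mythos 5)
Your proof is correct, and its overall skeleton matches the paper's: the direction from $\mathcal{F}'$ to $\mathcal{F}$ uses the containment $E_{Y\cup\{x,y\}}\subseteq E_Z$ for every removed set $E_Z$, exactly as the paper does, and the direction from $\mathcal{F}$ to $\mathcal{F}'$ is a pigeonhole argument playing the $k+1$ petals against $|S|\le k$. Where you differ is in how that pigeonhole is executed, and your version is in fact the more careful one. The paper asserts that, because the petals $X_i\setminus Y$ are pairwise distinct (hence, by the sunflower property, pairwise disjoint), there must be an index $i$ such that $S$ contains no edge with an endpoint in $X_i\setminus Y$; taken literally this does not follow, since an edge has two endpoints and can therefore meet two different petals, so $k$ edges could meet up to $2k$ of the $k+1$ petals. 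Your argument counts at the level of edge sets rather than vertices: you verify that $E_{X_i\cup\{x,y\}}\cap E_{X_j\cup\{x,y\}}\subseteq E_{Y\cup\{x,y\}}$ for $i\neq j$ (using $X_i\cap X_j=Y$ and $X_i\cap\{x,y\}=\emptyset$), so that if $S$ avoided $E_{Y\cup\{x,y\}}$, the edges $e_1,\ldots,e_{k+1}$ hitting the sets $E_{X_i\cup\{x,y\}}$ would be pairwise distinct, contradicting $|S|\le k$. Equivalently, the sets $E_{X_i\cup\{x,y\}}\setminus E_{Y\cup\{x,y\}}$ are pairwise disjoint, and an edge avoiding $E_{Y\cup\{x,y\}}$ can hit at most one of them --- which is the precise fact the paper's vertex-level phrasing was gesturing at. So your proposal not only proves the lemma by essentially the intended route, it also repairs the imprecision in the paper's own write-up of the forward direction.
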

\begin{proof}
Suppose that $S$ is a hitting set of $\mathcal{F}$.
Since $X_i \setminus Y \neq X_j \setminus Y$ for every $i \neq j$,
there is an index $i$ such that $S$ does not contain an edge with at least
one endpoint in $X_i \setminus Y$.
Since $S \cap E_{X_i \cup \{x,y\}} \neq \emptyset$,
$S$ contains an edge from $E_{Y \cup \{x,y\}}$.
Note that $E_{Y \cup \{x,y\}}$ is the only set in
$\mathcal{F}' \setminus \mathcal{F}$.
Therefore, $S$ is a hitting set of $\mathcal{F}'$.

Now suppose that $S$ is a hitting set of $\mathcal{F}'$.
By definition, if  $E_{Z} \in \mathcal{F} \setminus \mathcal{F}'$ then
$Y \cup \{x,y\} \subseteq Z$.
Since $S' \cap E_{Y \cup\{x,y\}} \neq \emptyset$, we have that
$S' \cap E_Z \neq \emptyset$.
Therefore, $S'$ is a hitting set of $\mathcal{F}$.
\end{proof}

The kernelization algorithm applies Rule~(R\ref{rule:sunflower}) until the
rule is not applicable,
and denote by $(\mathcal{F}',k)$ the resulting instance.
The algorithm then applies the following rule.

\begin{rrule}
If $|\mathcal{F}'|>2\cdot(t-2)!\cdot k^{t-1}$
return a fixed no instance.
\end{rrule}
\begin{lemma}
Rule~(\currentrule) is safe.
\end{lemma}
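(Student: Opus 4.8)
Since the rule returns a fixed no-instance, proving it safe amounts to showing that whenever the rule fires, i.e.\ whenever $|\mathcal{F}'| > 2\cdot(t-2)!\cdot k^{t-1}$, the instance $(\mathcal{F}',k)$ is genuinely a no-instance. The plan is to prove the contrapositive: assuming $\mathcal{F}'$ has a hitting set $S$ with $|S|\le k$, I would establish the bound $|\mathcal{F}'|\le 2\cdot(t-2)!\cdot k^{t-1}$, so that the rule can never fire on a yes-instance.

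The key step is a charging argument from the sets of $\mathcal{F}'$ to the edges of $S$. Every member of $\mathcal{F}'$ is of the form $E_W$ for a vertex set $W$ (this holds for the original sets and is preserved by Rule (R\ref{rule:sunflower})), and because $S$ hits $E_W$ there is an edge $(x,y)\in S$ with $x,y\in W$; I would charge $E_W$ to one such edge. Writing $\mathcal{F}'_{xy}$ for the family defined from $\mathcal{F}'$ in the same way $\mathcal{F}_{xy}$ is defined from $\mathcal{F}$, I claim the sets charged to a fixed edge $(x,y)$ inject into $\mathcal{F}'_{xy}$ via $E_W\mapsto W\setminus\{x,y\}$: if $E_W$ is charged to $(x,y)$ then $x,y\in W$, so $W\setminus\{x,y\}\in\mathcal{F}'_{xy}$ by definition, and $W$ is recovered from $W\setminus\{x,y\}$ by adding back $x$ and $y$. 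Hence at most $|\mathcal{F}'_{xy}|$ sets are charged to $(x,y)$.

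To finish, I would use that Rule (R\ref{rule:sunflower}) is no longer applicable to $\mathcal{F}'$, which means $|\mathcal{F}'_{xy}|\le 2\cdot(t-2)!\cdot k^{t-2}$ for every edge $(x,y)\in E(G)$; in particular this holds for the edges of $S\subseteq E(G)$. Summing the per-edge bound then yields $|\mathcal{F}'|\le\sum_{(x,y)\in S}|\mathcal{F}'_{xy}|\le k\cdot 2\cdot(t-2)!\cdot k^{t-2}=2\cdot(t-2)!\cdot k^{t-1}$, which is exactly the desired bound.

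The argument is essentially a clean double count, so I do not expect a serious obstacle; the two points deserving a line of justification are that the charging map lands in $\mathcal{F}'_{xy}$ and is injective (immediate from the definition of $\mathcal{F}'_{xy}$ and from $|W|\ge 2$), and that the edges we charge to are edges of $G$, so that the Rule (R\ref{rule:sunflower}) bound applies to them (immediate from $S\subseteq E(G)$).
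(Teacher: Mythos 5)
Your proof is correct and is essentially the same argument as the paper's: both bound, for each edge $(x,y)$ of the hitting set $S$, the number of sets of $\mathcal{F}'$ containing $(x,y)$ by $|\mathcal{F}_{xy}| \leq 2\cdot(t-2)!\cdot k^{t-2}$ (using that Rule~(R\ref{rule:sunflower}) is no longer applicable) and multiply by $|S| \leq k$. The only difference is cosmetic: you introduce the explicit notation $\mathcal{F}'_{xy}$ and spell out the injectivity of the charging map, where the paper reuses the symbol $\mathcal{F}_{xy}$ for the family derived from the current $\mathcal{F}'$.
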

\begin{proof}
Suppose that $(\mathcal{F}',k)$ is a yes instance, and let $S$ be a hitting
set of $\mathcal{F}'$ of size at most $k$.
It suffices to show that for every $(x,y) \in S$,
the number of sets in $\mathcal{F}'$ that contain $(x,y)$ is at most
$2 \cdot (t-2)! \cdot k^{t-2}$.
Fix some $(x,y) \in S$.
A set in $\mathcal{F}'$ that contains $(x,y)$ is of the form
$E_{X \cup \{x,y\}}$ and, by definition, $X \in \mathcal{F}_{xy}$.
Therefore, the number of sets in $\mathcal{F}'$ that contain $(x,y)$ is
at most $|\mathcal{F}_{xy}|$.
Since Rule~(R\ref{rule:sunflower}) cannot be applied,
$|\mathcal{F}_{xy}| \leq 2 \cdot (t-2)! \cdot k^{t-2}$.
\end{proof}

We now describe the third stage of the kernelization algorithm.
The algorithm builds a graph $G' = (V',E')$ as follows.
It first initializes $V' = \emptyset$ and $E' = \emptyset$.
Then, for every $E_X \in \mathcal{F}$,
if $|X| = t$, the algorithm performs 
$V' \gets V' \cup X$ and $E' \gets E' \cup E_X$.
If $|X| < t$, the algorithm performs
$V' \gets V' \cup X \cup V_X$, where $V_X$ is a set containing $t-|X|$ new
vertices, and $E' \gets E' \cup E_{X\cup V_X}$.

\begin{lemma}
$(G,k)$ is a yes instance if and only if $(G',k)$ is a yes instance.
\end{lemma}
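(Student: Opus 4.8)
The plan is to compose the equivalences already in hand and then to bridge from the reduced hitting-set instance back to the graph. Write $\mathcal{F}'$ for the family obtained by applying Rule~(R\ref{rule:sunflower}) exhaustively (the family from which $G'$ is built), and let $\mathcal{F}''$ be the hitting-set family associated with $G'$, i.e.\ $\mathcal{F}''$ contains $E_W$ for every clique $W$ of size $t$ in $G'$. By Observation~\ref{obs:S} applied to $G$, and by the same observation applied to $G'$, the statement becomes: $\mathcal{F}$ has a hitting set of size at most $k$ if and only if $\mathcal{F}''$ does. Iterating Lemma~\ref{lem:S} over the successive applications of the rule identifies the size-$\le k$ hitting sets of $\mathcal{F}$ with those of $\mathcal{F}'$, so the genuine task is to show that $\mathcal{F}'$ has a hitting set of size at most $k$ if and only if $\mathcal{F}''$ does.

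First I would pin down the clique structure of $G'$. For each $E_X\in\mathcal{F}'$ the construction creates a clique $X\cup V_X$ of size $t$ (with $V_X=\emptyset$ when $|X|=t$), and since the sets $V_X$ are fresh and pairwise disjoint, each $w\in V_X$ is adjacent in $G'$ only to $(X\cup V_X)\setminus\{w\}$. Hence any size-$t$ clique of $G'$ using a new vertex coincides with the corresponding $X\cup V_X$. The remaining size-$t$ cliques of $G'$ lie entirely in $V(G)$; here I would note that every edge of $G'$ joining two old vertices belongs to some $E_X$ with $E_X\in\mathcal{F}'$, and that each such $X$ is a clique of $G$ (original members of $\mathcal{F}$ are cliques, and every core $Y\cup\{x,y\}$ inserted by the rule is contained in a clique). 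So every size-$t$ clique of $G'$ on old vertices is already a clique of $G$.

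For the direction from $\mathcal{F}''$ to $\mathcal{F}'$, given a size-$\le k$ hitting set $S'$ of $\mathcal{F}''$ I would produce a hitting set $S$ of $\mathcal{F}'$ by replacing every edge of $S'$ incident to a new vertex of some $V_X$ with an arbitrary edge of $E_X$ (nonempty since every set inserted by the rule contains the two vertices $x,y$), keeping all old-to-old edges; this does not increase the size. As $S'$ hits $E_{X\cup V_X}\in\mathcal{F}''$ for each $E_X\in\mathcal{F}'$, the replaced edge lands in $E_X$, so $S$ hits $\mathcal{F}'$. For the reverse direction I would take the same edge set $S$ (discarding edges absent from $G'$). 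It meets each clique $X\cup V_X$ because $S$ hits $E_X\subseteq E_{X\cup V_X}$ through an old-to-old edge, which is present in $G'$.

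The main obstacle is the remaining \emph{spurious} size-$t$ cliques $W$ of $G'$ that lie in $V(G)$ but whose edge set $E_W$ was deleted during the reductions, so that $E_W\notin\mathcal{F}'$; a priori a hitting set of $\mathcal{F}'$ need not meet $E_W$. To handle these I would prove, by tracking the reductions, the invariant that for every clique $W$ of size $t$ in $G$ the current family always contains a set $E_X$ with $X\subseteq W$. This holds initially with $X=W$, and is preserved by Rule~(R\ref{rule:sunflower}): if the witnessing $E_X$ is one of the removed sets then $Y\cup\{x,y\}\subseteq X\subseteq W$, and the newly inserted core $E_{Y\cup\{x,y\}}$ is again a witness. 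Applying the invariant to $\mathcal{F}'$ yields some $E_{X^{*}}\in\mathcal{F}'$ with $X^{*}\subseteq W$, whence $E_{X^{*}}\subseteq E_W$, so any hitting set of $\mathcal{F}'$ meets $E_{X^{*}}$ and therefore $E_W$. This closes the reverse direction and, together with the chain of equivalences, proves the lemma. I expect this invariant to be the one delicate point, since it is the only place where the full reduction history, rather than just the final family $\mathcal{F}'$, is needed.
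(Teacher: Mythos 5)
Your proof is correct, and its skeleton matches the paper's (Observation~\ref{obs:S} plus iterated Lemma~\ref{lem:S}, the observation that new vertices only lie in their own gadget clique $X\cup V_X$, and the exchange argument swapping a gadget edge for an edge of $E_X$ --- which you do all at once rather than one edge at a time as the paper does). The genuine divergence is in how the two proofs deal with the \emph{spurious} cliques, and here your route differs from the paper's. You make the reduction fully symmetric --- hitting sets of $\mathcal{F}'$ versus hitting sets of the clique family $\mathcal{F}''$ of $G'$ --- and this symmetry forces you to prove, by induction over the reduction history, the invariant that every size-$t$ clique $W$ of $G$ always retains a witness $E_X\in\mathcal{F}'$ with $X\subseteq W$. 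The paper never needs this: in its forward direction it keeps hold of a deletion set $S$ of $G$ itself, which by Observation~\ref{obs:S} hits $E_W$ for \emph{every} clique $W$ of $G$, removed from the family or not, so spurious cliques are covered for free. Note also that your invariant, while correct, is derivable without a fresh induction: the already-proven easy direction of Lemma~\ref{lem:S} (a hitting set of $\mathcal{F}'$ of size at most $k$ hits $\mathcal{F}$) combined with $E_W\in\mathcal{F}$ gives exactly the conclusion you need from it. What your version buys is that it makes explicit two facts the paper uses silently --- that every size-$t$ clique of $G'$ contained in $V(G)$ is a clique of $G$ (because every old-to-old edge of $G'$ lies in some $E_X$ with $X$ a clique of $G$), and that the exchange step cannot break coverage of any other clique --- and your invariant holds for arbitrary hitting sets, not just those of size at most $k$; what the paper's version buys is brevity, avoiding the extra lemma entirely.
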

\begin{proof}
Suppose that $(G,k)$ is a yes instance and let $S$ be a deletion set of $G$
of size at most $k$.
Let $Y$ be a clique in $G'$ of size $t$.
If $Y \subseteq V(G)$ then $Y$ is also a clique in $G$, and therefore $S$
contains at least one edge from $E_Y$.
Otherwise, $Y$ contains a vertex from a set $V_X$ for some
$E_X \in \mathcal{F}'$.
By construction, a vertex in $V_X$ is not adjacent to vertices in
$V(G') \setminus (X \cup V_X)$.
Therefore, $Y = X \cup V_X$.
By Observation~\ref{obs:S} and Lemma~\ref{lem:S}, $S$ is a hitting set of
$\mathcal{F}'$. Since $E_X \in \mathcal{F}'$, we have that
$S$ contains an edge from $E_X$, and therefore $S$ contains an edge from $E_Y$.
It follows that $S$ is a deletion set of $G'$.

For the opposite direction,
suppose that $(G',k)$ is a yes instance and let $S$ be a deletion set of $G'$
of size at most $k$.
We claim that $(\mathcal{F}',k)$ is a yes instance.
Let $E_X \in \mathcal{F}'$.
If $|X| = t$ then $X$ is a clique in $G'$ and therefore $S$ contains an edge
from $E_X$.
If $|X| < t$ and $S$ does not contain an edge from $E_X$ then
$S$ must contain an edge $e$ from $E_{X \cup V_X} \setminus E_X$.
Let $e'$ be some edge from $E_X$.
Then, the set $S' = (S \setminus \{e\}) \cup \{e'\}$ is also a deletion set
of $G'$ of size $k$, and $S'$ contains an edge from $E_X$.
Therefore, there is a deletion set of $G'$ of size at most $k$ that
contains an edge from $E_X$ for every $E_X \in \mathcal{F}'$.
Thus, $(\mathcal{F}',k)$ is a yes instance.
By Observation~\ref{obs:S} and Lemma~\ref{lem:S}, $(G,k)$ is a yes instance.
\end{proof}

From the lemmas above, we obtain the result of this paper.
\begin{theorem}
$K_t$-free edge deletion has a kernel with $O(k^{t-1})$ vertices and edges.
\end{theorem}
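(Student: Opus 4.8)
The plan is to assemble the three stages into a single procedure and check that it runs in polynomial time, preserves the answer, and outputs an instance of the claimed size. Given $(G,k)$, I would first build the hitting set instance $(\mathcal{F},k)$ of the first stage by enumerating all cliques of size $t$ in $G$; there are at most $\binom{n}{t}$ of them, so this costs $n^{O(t)}$ time, and by Observation~\ref{obs:S} the deletion sets of $G$ are exactly the hitting sets of $\mathcal{F}$. I would then run the second stage, applying Rule~(R\ref{rule:sunflower}) exhaustively followed by the size check, and finally build the graph $G'$ as in the third stage. Correctness is immediate from the chain of equivalences already established: repeated application of Lemma~\ref{lem:S} shows that a set $S$ of size at most $k$ hits $\mathcal{F}$ if and only if it hits the reduced family $\mathcal{F}'$, the safeness of the size-check rule handles the case where $\mathcal{F}'$ is large, and the final lemma shows that $(G,k)$ is a yes instance if and only if $(G',k)$ is.

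For the size bound I would argue as follows. If the size-check rule fires, the algorithm returns a fixed no instance of constant size, which trivially meets the bound. Otherwise the family $\mathcal{F}'$ produced by the second stage satisfies $|\mathcal{F}'| \leq 2\cdot(t-2)!\cdot k^{t-1} = O(k^{t-1})$. In the third stage each set $E_X \in \mathcal{F}'$ contributes at most $t$ vertices (the elements of $X$ together with the $t-|X|$ padding vertices) and exactly $\binom{t}{2}$ edges. Summing over $\mathcal{F}'$ yields at most $t\cdot|\mathcal{F}'| = O(k^{t-1})$ vertices and at most $\binom{t}{2}\cdot|\mathcal{F}'| = O(k^{t-1})$ edges, as required.

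The step I expect to need the most care is the termination and running time of the second stage, since Rule~(R\ref{rule:sunflower}) may be applied an unbounded number of times a priori. The key observation is that each application strictly decreases $|\mathcal{F}|$: the $k+1$ sets $E_{X_i \cup \{x,y\}}$ are pairwise distinct and each contains $Y \cup \{x,y\}$, so all of them are removed, while only the single set $E_{Y \cup \{x,y\}}$ is added; hence the family shrinks by at least $k$ sets per application. Thus the rule fires at most $|\mathcal{F}|$ times. Each application runs in polynomial time because the required sunflower is guaranteed to exist by Lemma~\ref{lem:sunflower}, applied with $d = t-2$ (every set of $\mathcal{F}_{xy}$ has at most $t-2$ elements, and the threshold $2\cdot(t-2)!\cdot k^{t-2}$ matches the lemma), and can be found in polynomial time. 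Combined with the polynomial cost of the first and third stages, this shows that the whole procedure is a polynomial-time kernelization producing an instance with $O(k^{t-1})$ vertices and edges, which completes the proof.
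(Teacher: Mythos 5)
Your proposal is correct and follows essentially the same route as the paper, whose proof of the theorem is simply to combine Observation~\ref{obs:S}, Lemma~\ref{lem:S}, the safeness of the size-check rule, and the final equivalence lemma for $(G,k)$ and $(G',k)$. You additionally make explicit two details the paper leaves implicit---that each application of Rule~(R\ref{rule:sunflower}) removes at least $k+1$ sets while adding only one (so the second stage terminates in polynomially many iterations), and the per-set accounting of $t$ vertices and $\binom{t}{2}$ edges in the third stage---both of which check out.
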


\bibliographystyle{plain}
\bibliography{k4}

\begin{thebibliography}{1}

\bibitem{cai2015incompressibility}
Leizhen Cai and Yufei Cai.
\newblock Incompressibility of {$H$}-free edge modification problems.
\newblock {\em Algorithmica}, 71(3):731--757, 2015.

\bibitem{cai2012polynomial}
Yufei Cai.
\newblock {\em Polynomial kernelisation of {H}-free edge modification
  problems}.
\newblock PhD thesis, Chinese University of Hong Kong, 2012.

\bibitem{cao2018polynomial}
Yixin Cao, Ashutosh Rai, RB~Sandeep, and Junjie Ye.
\newblock A polynomial kernel for diamond-free editing.
\newblock In {\em Proc. 26th European Symposium on Algorithms (ESA)}, 2018.

\bibitem{erdos1960intersection}
Paul Erd{\"o}s and Richard Rado.
\newblock Intersection theorems for systems of sets.
\newblock {\em Journal of the London Mathematical Society}, 1(1):85--90, 1960.

\bibitem{fellows2011graph}
Michael~R Fellows, Jiong Guo, Christian Komusiewicz, Rolf Niedermeier, and
  Johannes Uhlmann.
\newblock Graph-based data clustering with overlaps.
\newblock {\em Discrete Optimization}, 8(1):2--17, 2011.

\bibitem{gramm2005graph}
Jens Gramm, Jiong Guo, Falk H{\"u}ffner, and Rolf Niedermeier.
\newblock Graph-modeled data clustering: Exact algorithms for clique
  generation.
\newblock {\em Theory of Computing Systems}, 38(4):373--392, 2005.

\bibitem{guillemot2013non}
Sylvain Guillemot, Fr{\'e}d{\'e}ric Havet, Christophe Paul, and Anthony Perez.
\newblock On the (non-) existence of polynomial kernels for {$P_l$}-free edge
  modification problems.
\newblock {\em Algorithmica}, 65(4):900--926, 2013.

\bibitem{sandeep2015parameterized}
R.~B. Sandeep and Naveen Sivadasan.
\newblock Parameterized lower bound and improved kernel for diamond-free edge
  deletion.
\newblock In {\em Proc. 10th International Symposium on Parameterized and Exact
  Computation (IPEC)}, 2015.

\end{thebibliography}

\end{document}